\newtheorem{THM}{Theorem}[section]
\newcommand{\com}[1]{\ifnum\count13<1 #1 \fi}
\def\squarebox#1{\hbox to #1{\hfill\vbox to #1{\vfill}}}
\def\qed{\hspace*{\fill}%
        \vbox{\hrule\hbox{\vrule\squarebox{.667em}\vrule}\hrule}\smallskip}
\newenvironment{proof}{\begin{trivlist}
\item[\hspace{\labelsep}{\em\noindent Proof.~}]}{\qed\end{trivlist}}
\def\squarebox#1{\hbox to #1{\hfill\vbox to #1{\vfill}}}
\def\qed{\hspace*{\fill}%
        \vbox{\hrule\hbox{\vrule\squarebox{.667em}\vrule}\hrule}\smallskip}
\begin{document}

\title{
	An improved lower bound \\ for one-dimensional online unit clustering
} 

\author{
	Jun Kawahara$^{1}$ and 
	Koji M. Kobayashi$^{2}$ 
	\\   
	{\footnotesize 
		$^{1}$Graduate School of Information Science, Nara Institute of Science and Technology 
	}
	\\
	{\footnotesize 
		$^{2}$National Institute of Informatics 
	}
}

\date{}
\maketitle

\begin{abstract}
	The online unit clustering problem was proposed by Chan and Zarrabi-Zadeh (WAOA2007 and Theory of Computing Systems 45(3), 2009), 
	which is defined as follows:
	``Points'' are given online in the $d$-dimensional Euclidean space one by one.
	An algorithm creates a ``cluster,'' which is a $d$-dimensional rectangle. 
	The initial length of each edge of a cluster is 0. 
	An algorithm can extend an edge until it reaches unit length independently of other dimensions.
	The task of an algorithm is to cover a new given point either by creating a new cluster and assigning it to the point, or by extending edges of an existing cluster created in past times. 
	The goal is to minimize the total number of created clusters.
	Chan and Zarrabi-Zadeh proposed some method to obtain a competitive algorithm for the $d$-dimensional case 
	using an algorithm for the one-dimensional case, 
	and thus the one-dimensional case has been extensively studied including some variants of the unit clustering problem. 

	In this paper, 
	we show a lower bound of $13/8 = 1.625$ on the competitive ratio of any deterministic online algorithm for the one-dimensional unit clustering, 
	improving the previous lower bound $8/5 (=1.6)$ presented by Epstein and van Stee (WAOA2007 and ACM Transactions on Algorithms 7(1), 2010). 
	Note that Ehmsen and Larsen (SWAT2010 and Theoretical Computer Science, 500, 2013) showed the current best upper bound of $5/3$, and 
	conjectured that the exact competitive ratio in the one-dimensional case may be $13/8$. 
\end{abstract}

\section{Introduction} \label{Intro}

Given some points, 
we consider the problem of partitioning points into some groups, namely {\em clusters}, 
which is generally called a {\em clustering problem}. 
The goal is to optimize an objective function according to applications. 
Clustering problems are applied for various purposes 
such as information retrieval, data mining and facility location. 
%
%
This problem is formulated as online problems, 
and some variants have been widely studied.
The {\em online unit covering problem}, proposed by Charikar et al.~\cite{MC04}, is the most basic variant.
In this problem, points are given in the $d$-dimensional Euclidean space one by one.
An algorithm places balls with unit radius so that they cover points.
The goal of the problem is to minimize the number of created balls. 
Chan and Zarrabi-Zadeh have introduced the {\em online unit clustering problem} \cite{TC07}.
In this problem, an {\em input} is a sequence of points.
Points are given in the $d$-dimensional Euclidean space one by one
and must be covered by rectangles, called {\em clusters}.
Initially, the length of each edge of a cluster is 0, and
each edge can be extended until unit length.
Once a cluster is placed, it must not be shifted and removed.
When a point is given outside existing clusters,
the task of an algorithm is to determine whether creating a new cluster or extending
some edges of an existing cluster to cover the point before the next one is given.
The goal of the problem is to minimize the number of created clusters. 
The performance of an online algorithm is evaluated using competitive analysis \cite{AB98,DS85}.
For any algorithm $ALG$ and an input $\sigma$, 
let $C_{ALG}(\sigma)$ denote the number of clusters created by $ALG$. 
If, for any input $\sigma$, an online algorithm $ON$ creates clusters at most $c$ of the optimal offline algorithm, called $OPT$, for $\sigma$, 
then we say that $ON$ is $c$-competitive or that the competitive ratio of $ON$ is at most $c$. 
Chan and Zarrabi-Zadeh \cite{TC07} showed that a $2^{d-1} c$-competitive algorithm can be obtained by using a $c$-competitive algorithm for the one-dimensional case (i.e., the case of $d = 1$), which is the current best algorithm for the $d$-dimensional case when $d \geq 2$.
This implies that by improving an algorithm for the one-dimensional case, we can also obtain a
better algorithm for the $d$-dimensional case.
Thus, the one-dimensional case has been extensively studied including some variants of the unit clustering problem. 
(The past results are summarized in Table~\ref{tab:prevresult}.)

\noindent
{\bf Previous Results and Our Result.}~~~
For the one-dimensional case, 
the best known deterministic upper bound on the competitive ratio is $5/3 \ (\approx 1.667)$, presented by
Ehmsen and Larsen \cite{ME10}, and the best known lower bound on the competitive ratio of any deterministic online algorithm
is $8/5 \ (=1.6)$, shown by Epstein and van Stee \cite{LE07}.

In this paper, for the one-dimensional deterministic case, 
we improve the lower bound from $8/5$ to $13/8 = 1.625$.
(Ehmsen and Larsen \cite{ME10} conjectured that the exact competitive ratio may be $13/8$.)

\begin{table}[t]
  \begin{center}
    \caption{Upper and lower bounds for one-dimensional unit clustering}
    \begin{tabular}{llll}
    \hline
                & Years             & Deterministic           & Randomized \\
                & (first appearance)&                         & \\ \hline
    Upper Bound & 2007 \cite{TC07}  & $2$                     & $15/8 \ (= 1.875)$ \\
                & 2007 \cite{HZ07}  &                         & $11/6 \ (\approx 1.834)$ \\
                & 2008 \cite{LE07}  & $7/4 \ (= 1.75)$        & $7/4 \ (= 1.75)$ \\
                & 2010 \cite{ME10}  & $5/3 \ (\approx 1.667)$ & \\ \hline
    Lower Bound & 2015 [this paper] & $13/8 \ (= 1.625)$      & \\
                & 2008 \cite{LE07}  & $8/5 \ (=1.6)$          & $3/2 \ (= 1.5)$ \\
                & 2007 \cite{TC07}  & $3/2 \ (=1.5)$          & $4/3 \ (\approx 1.333)$ \\
    \hline
    \end{tabular}
    \label{tab:prevresult}
  \end{center}
\end{table}

\noindent
{\bf Related Results.}~~~
For the two-dimensional case, 
Epstein and van Stee \cite{LE07} gave a deterministic lower bound of $2$ and a randomized lower bound of $11/6 \ (\approx 1.833)$. 
Ehmsen and Larsen \cite{ME10} showed a lower bound of $13/6 \ (\approx 2.166)$ for any deterministic algorithm. 
Epstein et~al.\ \cite{LE08} and Csirik et~al.\ \cite{JC13} studied some variants of the one-dimensional online unit clustering. 

\section{Improved Lower Bound} \label{LB}

\begin{THM} \label{thm:1}
	The competitive ratio of any deterministic online algorithm is at least $13/8$. 
\end{THM}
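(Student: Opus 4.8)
The plan is to build an adversary argument that forces any deterministic online algorithm to create at least $13$ clusters on an instance the offline optimum covers with $8$. Since the competitive ratio is the worst-case ratio $C_{ON}(\sigma)/C_{OPT}(\sigma)$, exhibiting a single family of inputs (or a finite decision tree of inputs) achieving the ratio $13/8$ suffices. The natural framework is to present the points in phases: in each phase the adversary reveals a small gadget of points spaced so that the online algorithm's irrevocable clustering decisions on early points can be punished by later points. The adversary maintains a set of ``live'' regions on the line, and within each region it presents a configuration (for the one-dimensional problem the relevant spacings are multiples of the unit length, so points at coordinates like $0, 1-\epsilon, 1, 2-\epsilon$ and so on) chosen so that however the online algorithm extends or creates clusters, the offline optimum can consolidate more aggressively.

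First I would fix notation for the one-dimensional case: a cluster is an interval of length at most $1$, and covering a point means the point lies in some cluster's interval. The key structural fact I would exploit is that when two points are at distance slightly more than $1$ apart, they cannot lie in a common cluster, whereas points within distance $1$ may or may not share a cluster depending on earlier commitments. I would then design a base gadget on a short segment: present points in an order such that the online algorithm must, at the moment of covering each point, decide whether to open a fresh unit cluster or stretch an existing one, and the adversary's continuation (revealed only after the algorithm commits) makes a ``stretch'' decision cost an extra cluster later, while a ``new cluster'' decision wastes coverage now. By analyzing the algorithm's possible responses to a single gadget as a case split, I would show the gadget forces a local ratio strictly above $8/5$, and that gadgets can be chained on disjoint (far-apart) segments so the ratios add without interference. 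The arithmetic target $13/8$ suggests the full instance is assembled from enough repetitions of one or two gadget types so that the totals are exactly $13$ online clusters against $8$ offline clusters.

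The central analytical step is the case analysis at each gadget: I would enumerate the finitely many ``types'' of clustering state the algorithm can be in after the first few points of a gadget (e.g.\ which points have been merged into one cluster, which occupy separate clusters, and how much slack each cluster's interval has toward the adversary's next point), and for each type specify the adversary's next point to either force an additional online cluster or to let $OPT$ save one. Because the online algorithm is deterministic, its state after a prefix is fully determined once the prefix is fixed, so the adversary's strategy is a decision tree whose leaves I would need to verify all realize the ratio at least $13/8$; bounding the number of leaves and the offline cost at each leaf is the bookkeeping core of the proof. I would compute $C_{OPT}$ at each leaf by explicitly exhibiting an offline clustering, which is the easy direction, and lower-bound $C_{ON}$ by the irrevocability constraints accumulated along the path.

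The hard part will be pinning down the exact gadget geometry and the merging of phases so the numbers land precisely on $13$ and $8$ rather than on a weaker ratio, and in particular ensuring that the slack/overlap conditions the adversary relies on at one gadget are not accidentally satisfied ``for free'' by coverage the algorithm bought in a neighboring gadget. Handling this requires placing the gadgets far enough apart (at distance strictly greater than $1$) that no single cluster can straddle two gadgets, thereby decoupling the gadgets' analyses; I would verify this separation explicitly. A secondary subtlety is making the adversary's choices depend only on the algorithm's \emph{observable} past decisions (which points it has covered and with which cluster), since the algorithm may cover a point with a cluster created much earlier, so the state description must track, for every existing cluster, its current interval endpoints, and the case split must be exhaustive over these configurations. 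Once the gadget is shown to force the local ratio and the decoupling is established, the theorem follows by summing a fixed number of gadget contributions to reach $13/8$.
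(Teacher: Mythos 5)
Your proposal has a genuine gap, and it lies exactly in the step you treat as routine: the claim that gadgets placed at mutual distance greater than $1$ can be chained ``so the ratios add without interference,'' with repetitions of a gadget forcing a local ratio ``strictly above $8/5$'' summing to $13$ online clusters against $8$ offline clusters. Ratios do not add; costs do, and the total ratio is a mediant. Concretely, if the instance decomposes into decoupled gadgets with per-gadget costs $(a_i,b_i)$, then $\sum_i a_i/\sum_i b_i$ lies between $\min_i a_i/b_i$ and $\max_i a_i/b_i$. Worse, the decoupling you introduce on purpose (separation $>1$, so no cluster straddles two gadgets) lets a deterministic algorithm respond to each gadget independently: whatever adaptive continuation the adversary uses inside gadget $i$, the algorithm can reach a leaf whose ratio is at most the value $v_i$ of that single-gadget game, and then the mediant of the reached leaves is at most $\max_i v_i$. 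So a chain of decoupled gadgets can never force a ratio better than the best single gadget; in particular, repeating a gadget of value $8/5+\varepsilon$ forces only $8/5+\varepsilon$, and to force $13/8$ you need one gadget whose value is already at least $13/8$ --- at which point the chaining is superfluous. Since no gadget of value above $8/5$ was known prior to this result, your plan defers the entire mathematical content of the theorem to the part you explicitly label as ``the hard part'' (``pinning down the exact gadget geometry''), and the scaffolding around it cannot manufacture the improvement.

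This is also precisely where your route diverges from the paper's proof, which is one \emph{connected} adaptive sequence rather than a sum of isolated gadgets: points at $3,4,5,6$, then $2,1,0,2.5$, then $7,8,8.5,9,10,11,9.5,12,13,\ldots$, all within a single stretch of length about $14$, where interference between clusters committed at different stages is the engine of the argument rather than a nuisance to be eliminated (the cluster covering $3,4$ constrains what can absorb $2.5$; the clusters around $8,8.5,9$ constrain $9.5$; the final continuations thread points such as $4.5,5.6,7.2,8.3,9.7,11.5$ between clusters created much earlier). Every leaf of that decision tree is verified to have ratio at least $13/8$: the deviation branches give $2/1$, $5/3$, $7/4$, $10/6$, $9/5$, $10/6$, and the three terminal branches give exactly $13/8$. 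If you want to salvage your framework, you must abandon the decoupling and carry out the exhaustive case analysis over one connected region --- which is to say, reconstruct something equivalent to the paper's table.
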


\begin{proof}
	Let $ON$ be any deterministic online algorithm and we give $ON$ an instance $\sigma$ in Table~\ref{tab:lb}. 
	In the table, 
	column ``Point'' presents the position of each given new point. 
	``Cluster'' shows the name of $ON$'s cluster assigned to the new point. 
	If there exist other clusters which can be assigned at that time, 
	column ``Note'' explains how $ON$ behaves and how much the costs of $ON$ and $OPT$ are if the other clusters are used. 
	In addition, 
	``Note'' tells that $ON$ can assign an existing cluster $K$ or $L$ to the last given point and 
	how much the costs are charged at each case. 
	%
	For example, when $ON$ is given the fourth point at position 6, 
	$ON$ can choose either to create a new cluster $F$ or to assign the existing cluster $E$. 
	If $ON$ creates $F$, then the next point arrives at position 2 and the input goes on. 
	Otherwise, three points are given one by one and the cost ratio is $5/3$, which is larger than $13/8$. 

	Therefore, 
	we can show that $C_{ON}(\sigma) / C_{OPT}(\sigma) \geq 13/8$. 
\begin{table*}[h]
\begin{center}
	\renewcommand{\arraystretch}{1.1}
	\caption{Lower Bound Instance $\sigma$}
	\begin{tabular}{llp{120mm}}
		\hline
			Point & Cluster &  Note \\
		\hline
			3 & $D$ &  \\
		\hline
			4 & $D$ & If $ON$ creates a new cluster, then we have $C_{ON}(\sigma) = 2$ and $C_{OPT}(\sigma) = 1$. \\
		\hline
			5 & $E$ &  \\
		\hline
			6 & $F$ & If $ON$ assigns $E$ to the given point, then three points are given at $2.5, 4.5$ and $6.5$ respectively and $C_{ON}(\sigma) = 5$ and $C_{OPT}(\sigma) = 3$ hold. \\
		\hline
			2 & $B$ &  \\
		\hline
			1 & $B$ & If a new cluster is created, we obtain the cost ratio of $5/3 \approx 1.66$. \\
		\hline
			0 & $A$ &  \\
		\hline
			2.5 & $C$ &  \\
		\hline
			7 & $F$ & If $ON$ creates a new one, then the cost ratio becomes $7/4 = 1.75$. \\
		\hline
			8 & $G$ & \\
		\hline
			8.5 & $H$ & If $G$ is assigned to the given point, then four points arrive at $4.5, 5.6, 7.4$ and $9.5$ respectively and we have the ratio of $10/6 \approx 1.66$. ((i) in Fig.~\ref{fig:LB})\\
		\hline
			9 & $H$ & If $ON$ newly creates a cluster, we obtain the cost ratio of $9/5 = 1.8$. \\
		\hline
			10 & $J$ &  \\
		\hline
			11 & $J$ & If a new cluster is created, then $C_{ON}(\sigma) = 10$ and $C_{OPT}(\sigma) = 6$ hold. \\
		\hline
			9.5 & $I$ & \small If $ON$ assigns $H$ to the given point, then six points are given at $4.5, 5.6, 7.2, 8.3, 9.7$ and $11.5$ respectively, and we have the cost ratio of $13/8 = 1.625$. ((ii) in Fig.~\ref{fig:LB})\\
		\hline
			12 & $K$ &  \\
		\hline
			13 & $K$ & Two points arrive at $11.5$ and $14$ respectively and the cost ratio becomes $13/8 = 1.625$. ((iii) in Fig.~\ref{fig:LB})\\
		\cline{2-3}
			   & $L$ & Two points are given at $4.5$ and $5.6$ respectively and the ratio of $13/8 = 1.625$ is obtained. ((iv) in Fig.~\ref{fig:LB})\\
		\hline
	\end{tabular}
	\label{tab:lb}
\end{center}
\end{table*}

\end{proof}
\begin{figure*}[h]
	 \begin{center}
	  \includegraphics[width=140mm]{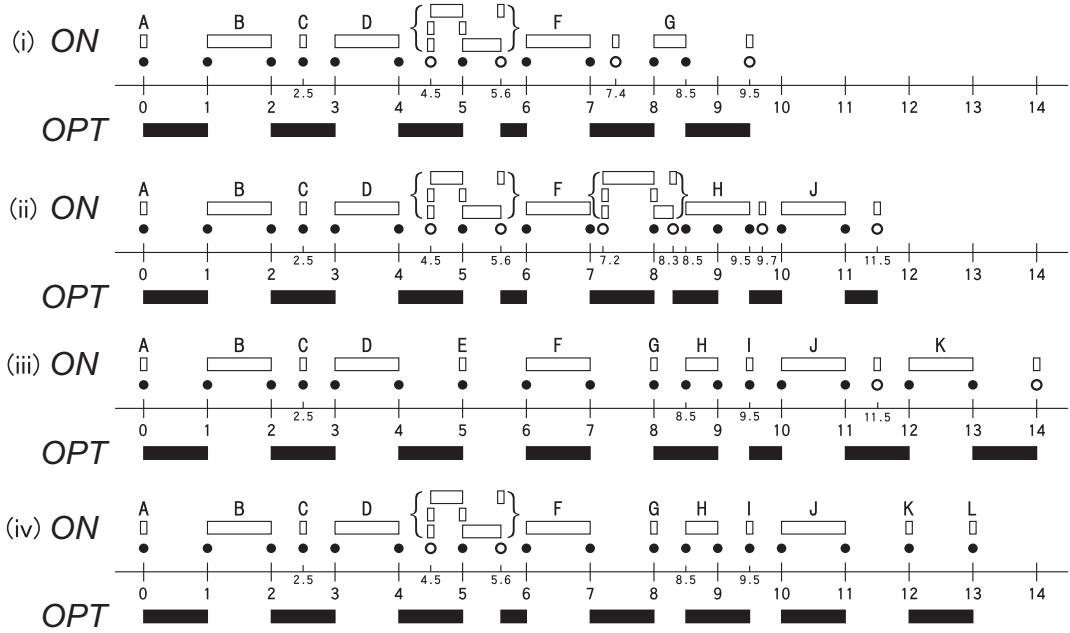}
	 \end{center}
	 \caption{
	 	States of Clusters. 
	 	White rectangles above the axes denote clusters created by $ON$. 
		On the other hand, black ones mean $OPT$'s. 
		For example, 
		the top figure (i) demonstrates that 
		$ON$ assigns $G$ to the point at 8.5 and then the four points 4.5, 5.6, 7.4 and 9.5 are given. 
		Note that $ON$ uses at least two clusters to cover the three points 4.5, 5 and 5.6. 
	 }
	\label{fig:LB}
\end{figure*}

\section{Discussion}
\ifnum \count10 > 0

\fi
\ifnum \count11 > 0
Developing an online algorithm $ON$ whose competitive ratio is at most $c$ in the one-dimensional case, 
we have little other choice than showing that 
for each $x$ clusters $ON$ creates, $OPT$ necessarily does at least $x/c \ (= y)$ clusters. 
This also appears in the analyses of the $7/4$ and $5/3$-competitive algorithms in \cite{LE07} and \cite{ME10}, respectively.
Ehmsen and Larsen \cite{ME10} stated that 
if $x \leq 20$ and the optimal competitive ratio for the one-dimensional case is neither $8/5$ nor $5/3$, 
then the possible competitive ratio is only $13/8$ or $18/11$. 
They conjectured that these values may be optimal. 
(Note that pairs of integers $(x, y)$ such that $8/5 < x/y < 5/3$ holds are
$(13, 8),\ (18, 11),\ (21, 13),\allowbreak \ (29, 18),\allowbreak \ (34, 21),\ldots$ in ascending order of $x$.) 
Generally speaking, we need to carry out a more exhaustive case-analysis 
as $x$ grows or $c$ decreases. 
Thus, to obtain $ON$'s competitive ratio of $c=13/8$, i.e., $x = 13$,
it seems that more complicated analysis and $ON$'s behaviors are required than $c=5/3$. 
Specifically, given our lower bound instance,
$ON$ must behave in the same way described in Table~\ref{tab:lb}.
(Of course $ON$ has to be able to deal with the other inputs as well.) 

\fi


\begin{thebibliography}{99}

\bibitem{AB98}
A. Borodin, and R. El-Yaniv,
``Online computation and competitive analysis,''
{\it Cambridge University Press},
1998.


\bibitem{TC07}
T.M. Chan, and H. Zarrabi-Zadeh, 
``A randomized algorithm for online unit clustering,''
{\it Theory of Computing Systems}, Vol. 45, No. 3,
pp. 486--496, 2009.

\bibitem{MC04}
M. Charikar, C. Chekuri, T. Feder, and R. Motwani, 
``Incremental clustering and dynamic information retrieval,''
{\it SIAM Journal on Computing}, Vol. 33, No. 6,
pp. 1417--1440, 2004.


\bibitem{JC13}
J. Csirik, L. Epstein, C. Imreh, and A. Levin, 
``Online clustering with variable sized clusters,''
{\it Algorithmica}, Vol. 65, No. 2,
pp. 251--274, 2013.


\bibitem{ME10}
M. Ehmsen, and K. Larsen, 
``Better bounds on online unit clustering,''
{\it Theoretical Computer Science}, Vol. 500,
pp. 1--24, 2013.

\bibitem{LE08}
L. Epstein, A. Levin, and R. van Stee, 
``Online unit clustering: Variations on a theme,''
{\it Theoretical Computer Science}, Vol. 407, No. 1-3, 
pp. 85--96, 2008. 


\bibitem{LE07}
L. Epstein, and R. van Stee, 
``On the online unit clustering problem,''
{\it ACM Transactions on Algorithms}, Vol. 7, No. 1,
2010.

\bibitem{DS85}
D. Sleator, and R. Tarjan,
``Amortized efficiency of list update and paging rules,''
{\it Communications of the ACM}, Vol. 28, No. 2,
pp. 202--208, 1985.

\bibitem{HZ07}
H. Zarrabi-Zadeh, and T.M. Chan, 
``An improved algorithm for online unit clustering,''
{\it Algorithmica}, Vol. 54, No. 4,
pp. 490--500, 2009.


\end{thebibliography}
\end{document}